\newtheorem{theorem}{Theorem}
\newtheorem{lemma}[theorem]{Lemma}
\newtheorem{proposition}[theorem]{Proposition}
\newtheorem{example}[theorem]{Example}
\DeclarePairedDelimiterX{\inp}[2]{\langle}{\rangle}{#1, #2}
\newcommand*\bigcdot{\mathpalette\bigcdot@{.5}}
\newcommand*\bigcdot@[2]{\mathbin{\vcenter{\hbox{\scalebox{#2}{$\m@th#1\bullet$}}}}}
\newcommand{\muspace}{\mspace{1mu}}
\DeclareRobustCommand{\scond}{\mathchoice{\muspace\vert\muspace}{\vert}{\vert}{\vert}}
\DeclareRobustCommand{\discint}{\mathchoice{\mspace{-1.5mu}:\mspace{-1.5mu}}{\mspace{-1.5mu}:\mspace{-1.5mu}}{:}{:}}
\newcommand{\suchthat}{\mathchoice{\colon}{\colon}{:\mspace{1mu}}{:}}
\newcommand{\Ac}{\mathcal{A}}
\newcommand{\Bc}{\mathcal{B}}
\newcommand{\Gc}{\mathcal{G}}
\newcommand{\Hc}{\mathcal{H}}
\newcommand{\Pc}{\mathcal{P}}
\newcommand{\Zc}{\mathcal{Z}}
\newcommand{\Xv}{{\bf X}}
\newcommand{\xv}{{\bf x}}
\newcommand{\Pb}{{\mathbf P}}
\newcommand{\xb}{{\mathbf x}}
\newcommand{\Zt}{{\tilde{Z}}}
\newcommand{\kt}{{\tilde{k}}}
\newcommand{\zt}{{\tilde{z}}}
\newcommand{\gt}{{\tilde{g}}}
\def\b{\beta}
\def\e{\epsilon}
\def\eps{\epsilon}
\def\th{\theta}
\DeclareMathOperator\E{\mathsf{E}}
\let\P\relax
\DeclareMathOperator\P{\mathsf{P}}
\newcommand\eg{e.g.,\xspace}
\newcommand\ie{i.e.,\xspace}
\def\textiid{i.i.d.\@\xspace}
\newcommand\iid{\ifmmode\text{ i.i.d. } \else \textiid \fi}
\newcommand{\Real}{\mathbb{R}}
\newcommand{\ones}{\mathds{1}}
\def\mathllap{\mathpalette\mathllapinternal}
\def\mathllapinternal#1#2{%
  \llap{$\mathsurround=0pt#1{#2}$}}
\def\clap#1{\hbox to 0pt{\hss#1\hss}}
\def\mathclap{\mathpalette\mathclapinternal}
\def\mathclapinternal#1#2{%
  \clap{$\mathsurround=0pt#1{#2}$}}
\let\oldstackrel\stackrel
\renewcommand{\stackrel}[2]{\oldstackrel{\mathclap{#1}}{#2}}
\DeclarePairedDelimiterX{\infdivx}[2]{(}{)}{%
  #1\;\delimsize\|\;#2%
}
\def\dtv{\delta}
\renewcommand{\hbar}{h\mathllap{\overline{\vphantom{h}\hphantom{\rule{4.6pt}{0pt}}}\mspace{0.77mu}}}
\newcommand{\urltilde}{\kern -.06em\lower -.06em\hbox{~}\kern .02em}
\newcommand{\indi}[1]{\ones_{\{#1\}}}
\DeclarePairedDelimiterX{\norm}[1]{\lVert}{\rVert}{#1}
\DeclarePairedDelimiterX{\abs}[1]{\lvert}{\rvert}{#1}
\newcommand*\diff{\mathop{}\!\mathrm{d}}
\let\oldpartial\partial
\renewcommand*{\partial}{\mathop{}\!\oldpartial}
\newcommand{\defeq}{\mathrel{\mathop{:}}=}
\newcommand\numberthis{\addtocounter{equation}{1}\tag{\theequation}}
\renewcommand{\b}{\beta}
\newcommand{\Lap}{\mathsf{L}}
\newcommand{\RegPor}{\mathsf{Reg}^{\mathsf{port}}}
\newcommand{\ERegPor}{\overline{\mathsf{Reg}}^{\mathsf{port}}}
\newcommand{\RegProb}{\mathsf{Reg}^{\mathsf{prob}}}
\newcommand{\qkt}{q_{\mathsf{KT}}}
\newcommand{\nt}{\Tilde{n}}
\newcommand{\thb}{\boldsymbol{\theta}}
\renewcommand{\th}{\theta}
\renewcommand{\dtv}{d_{\mathsf{TV}}}
\DeclareMathOperator{\dham}{d_H}
\title{On Universal Portfolios with Continuous Side Information}
\author{
Alankrita Bhatt$^*$\\
UC San Diego\\
\href{mailto:a2bhatt@ucsd.edu}{a2bhatt@ucsd.edu}
\and J. Jon Ryu\thanks{Both authors contributed equally to this work.}\\
UC San Diego\\
\href{mailto:jongharyu@ucsd.edu}{jongharyu@ucsd.edu}
\and Young-Han Kim\\
UC San Diego/Gauss Labs Inc.\\
\href{mailto:yhk@ucsd.edu}{yhk@ucsd.edu}
}
\date{\vspace{-1em}}
\begin{document}

\maketitle

\begin{abstract}
A new portfolio selection strategy that adapts to a continuous side-information sequence is presented, with a universal wealth guarantee against a class of state-constant rebalanced portfolios with respect to a state function that maps each side-information symbol to a finite set of states.
In particular, given that a state function belongs to a collection of functions of finite Natarajan dimension, the proposed strategy is shown to achieve, asymptotically to first order in the exponent, the same wealth as the best state-constant rebalanced portfolio with respect to the best state function, chosen in hindsight from observed market.
This result can be viewed as an extension of the seminal work of Cover and Ordentlich (1996) that assumes a single state function.
\end{abstract}



\newcommand{\revision}[1]{\textcolor{blue}{#1}}

\section{Introduction}
We study the classical problem of portfolio selection, formally defined as follows.
Suppose that there exist $m\ge 2$ stocks in a stock market and let $\xb_t=(x_{t1},\ldots,x_{tm})\in \Real_{\ge 0}$ denote a market vector at time $t$, which encodes the \emph{price relatives} of stocks on that day. 
That is, for each stock $i\in[m] \defeq \{1,\ldots,m\}$, $x_{ti}\ge 0$ is the ratio of the end price to the start price on day $t$.
Concretely, an investment strategy $a$, at each day $t$, outputs a nonnegative weight vector $a(\cdot|\xv^{t-1})\in\Delta^{m-1}$ over the stocks $[m]$, upon which the investor distributes her wealth accordingly; hereafter, we use $\Bc\defeq\Delta^{m-1}\defeq\{(\th_1,\ldots,\th_m)\in\Real_{\ge 0}^m\suchthat \sum_{i=1}^m \th_i=1\}$ to denote the standard $m$-simplex.
That is, the multiplicative wealth gain on day $t$ (\ie the ratio of wealth on day $t$ to the wealth on day $t-1$) is $\sum_{j\in[m]} a(j|\xv^{t-1})x_{tj}$. Thus, her cumulative wealth gain after $n$ days becomes
\begin{align*}
S_n(a,\xb^n)
&\defeq \prod_{t=1}^n \sum_{j\in[m]} a(j|\xb^{t-1}) x_{tj}
\numberthis
\label{eq:portfolio_wealth_product}
=\sum_{y^n\in[m]^n} \Bigl(
\prod_{t=1}^n a(y_t|\xb^{t-1})\Bigr)\xb(y^n),
\end{align*}
where $\xb(y^n)\defeq x_{1y_1}\cdots x_{ny_n}$ denotes the wealth gain of an extreme investment strategy that puts all money to the stock $y_t$ on day $t$, and the second equality follows from the distributive law.

An investor's goal is to design an investment strategy that maximizes her cumulative wealth $S_n(a,\xv^n)$. 
For a stock market where $\xv^n$ are \iid, it is known that the log-optimal portfolio $\thb^\star$ that maximizes $\E[\log \thb^T \Xv]$ is asymptotically and competitively optimal. A similar result is well-established for stationary ergodic markets, see, \eg \cite[Chapter~16]{Cover--Thomas2006}.
The log-optimal portfolio theory with stochastic market assumptions, however, is unrealistic, as modeling a stock market could be harder than predicting the market. 

As a more realistic alternative, \citet{Cover1991} presented \emph{universal portfolios} that asymptotically achieve the best wealth, to first order in the exponent, attained by a certain class of reference portfolios, with \emph{no statistical assumptions} on the stock market.
For the reference class, Cover considered a class of constant rebalanced portfolios (CRPs), where a CRP parameterized by a weight vector $\thb\in\Bc$ is defined to redistribute its wealth according to $\thb$ on every day. Note that CRPs are optimal in an \iid stock market when the distribution is known.

Later, \citet{Cover--Ordentlich1996} extended the theory to a setup where a discrete side information sequence is causally available to an investor; in practice, the side information sequence can be thought to encode an external information 
that may help predict the stock market. 
They proposed a variation of \citeauthor{Cover1991}'s universal portfolios that asymptotically achieves the best wealth attained by a class of \emph{state-wise} CRPs that may play different weight vectors according to the side information.

Taking one step further, in this paper, we consider a more challenging scenario in which a side information sequence $z^n\in\Zc^n$ is continuous-valued, which could even be the (truncated) market history itself.
A reference portfolio we aim to compete with is parameterized by a state-wise CRP and a \emph{state function} $g\suchthat\Zc\to[S]$ for some $S\ge 2$ and  plays the state-wise CRP according to the state sequence $g(z^n)\defeq g(z_1)\ldots g(z_n)$, where we assume a class of state functions $\Gc$ from which $g$ is drawn; note that larger the $\Gc$, the richer the reference class.
This flexibility in the class $\Gc$ and the choice of continuous side information sequence may hugely enlarge the capacity of the competitor class since it can capture a variety of investment strategies. As a simple example, consider a portfolio strategy that selects the state based on whether the price relative of the first stock yesterday $\xv_{t,1} \ge \b$ or not for a (variable) threshold $\b$. This falls into this enlarged class with $z_t = \xv_{t-1}$.

As the main result, we propose a new investment strategy that asymptotically achieves the same wealth attained by the best state-constant rebalanced portfolios with a state function drawn from a class of functions of finite Natarajan dimension, under a mild regularity condition on the stochasticity of the side information sequence $Z^n$.
The proposed strategy is based on a generalization of a universal probability assignment scheme recently proposed by \citet{Bhatt--Kim2021}. Note that we assume no transaction costs and that the investor's actions do not affect the market.

The rest of the paper is organized as follows.
In Section~\ref{sec:review}, we review universal portfolios without and with discrete side information, highlighting the connection between universal compression (or probability assignment) and universal portfolios. Section~\ref{sec:main} described the proposed algorithm and a crude approximation algorithm for its simulation, together with some concrete examples of side information sequence. 
We present the proof of the main theorem in Section~\ref{sec:proofs}.
We conclude with discussing related work in Section~\ref{sec:related}. All deferred proofs can be found in the Appendix.

\section{A Review of Universal Portfolio Theory}
\label{sec:review}
\subsection{Universal Portfolios}
In his seminal work, \citet{Cover1991} set an ambitious goal that aims to design an investment strategy $b$ to compete with the best strategy in a class $\Ac$ of investment strategies for any stock market $\xv^n$, in the sense that it minimizes the worst-case regret
\[
\RegPor_n(b,\Ac)\defeq\sup_{\xv^n}\sup_{a\in\Ac}\log\frac{S_n(a,\xv^n)}{S_n(b,\xv^n)}.
\]
We call a portfolio $b$ \emph{universal with respect to $\Ac$} if $\RegPor_n(b,\Ac)=o(n)$, \ie in words, $b$ achieves the same exponential wealth growth rate attained by the best strategy in $\Ac$ chosen in hindsight with observed market.


Remarkably, Cover constructed a universal portfolio with respect to the class of CRPs and established its universality. 
Cover's theory is based on the key observation that competing against CRPs in portfolio optimization is equivalent to competing against \iid Bernoulli models in log-loss prediction problem.
In what follows, we describe this relationship in a general form beyond between \iid probabilities and CRPs.

For any sequential probability assignment scheme $q(\cdot|y^{t-1}) \in \Bc$ (where $y_i \in [m]$) the \emph{probability induced portfolio} $a=\phi(p)$ is defined as
\[
a(j|\xv^{t-1}) \defeq \frac{\sum_{y^{t-1} \in [m]^{t-1}} p(y^{t-1}j)\xv(y^{t-1})}{\sum_{y^{t-1} \in [m]^{t-1} } p(y^{t-1})\xv(y^{t-1})}.
\numberthis
\label{eq:def_prob_induced_portfolio}
\]
Note that if $p$ is an \iid probability, \ie $p(\cdot|y^{t-1}) = \thb \in \Bc$, it is easy to check from the expression~\eqref{eq:def_prob_induced_portfolio} that the corresponding portfolio $\phi(p)$ is the CRP parameterized by $\thb$; thus the class of CRPs $\Ac^{\mathsf{CRP}}$ is $\phi(\Pc^{\otimes})$, where we use $\Pc^{\otimes}$ to denote the class of \iid probabilities.

A peculiar property of a probability induced portfolio $a=\phi(p)$ is that the daily gain can be written as
\[
\sum_{y_t\in[m]} a(y_t|\xv^{t-1}) \xv_t(y_t) = \frac{\sum_{y^t} p(y^t)\xv(y^t)}{\sum_{y^{t-1}} p(y^{t-1})\xv(y^{t-1})},
\]
and thus by telescoping, the cumulative wealth gain~\eqref{eq:portfolio_wealth_product} becomes
\begin{align}
S_n(\phi(p),\xv^n) = \sum_{y^n \in [m]^n} p(y^n)\xv(y^n).
\label{eq:wealth_achieved_by_prob_induced_portfolio}
\end{align}
In view of this expression, a probability induced portfolio can be interpreted as a \emph{fund-of-funds}, \ie a mixture of the extremal portfolios with weights $p(y^n)$. 

As alluded to earlier, there is an intimate connection between the portfolio optimization with respect to a class of probability induced portfolios and the corresponding log-loss prediction problem.
In the log-loss prediction problem, given a class of probabilities $\Pc$, we define the worst-case regret of a probability $q$ with respect to $\Pc$ as
\begin{align} \label{eq:RegProbDefn}
\RegProb_n(q,\Pc)=\sup_{y^n} \sup_{p\in\Pc} \log \frac{p(y^n)}{q(y^n)}
\end{align} 
and call a probability $q$ \emph{universal} with respect to $\Pc$ if $\RegProb_n(q,\Pc)=o(n)$.
The following proposition shows that the portfolio optimization with respect to $\phi(\Pc)$ is no more difficult than the corresponding log-loss prediction problem with respect to $\Pc$.
\begin{proposition}
\label{prop:portfolio_is_not_harder}
For any probability $q$ and any class of probability assignments $\Pc$, we have \[\RegPor_n(\phi(q),\phi(\Pc)) \le \RegProb_n(q,\Pc).\]
\end{proposition}
\begin{proof}
We first recall \eqref{eq:wealth_achieved_by_prob_induced_portfolio} that
the cumulative wealth of the probability induced portfolio $\phi(p)$ is written as $S_n(\phi(p),\xb^n)=\sum_{y^n} p(y^n)\xb(y^n)$.
Hence, for any probability $q$, we can write
\begin{align*}
\RegPor_n(\phi(q), \phi(\Pc))
&=\sup_{\xv^n}\sup_{p\in\Pc} \frac{S_n(\phi(p),\xb^n)}{S_n(\phi(q),\xb^n)}\\
&=\sup_{\xv^n}\sup_{p\in\Pc}\frac{\sum_{y^n} p(y^n)\xb(y^n)}{\sum_{y^n} q(y^n)\xb(y^n)}\\
&\stackrel{(a)}{\le}
\sup_{p\in\Pc}\max_{y^n} \frac{p(y^n)}{q(y^n)}
= \RegProb_n(q,\Pc),
\end{align*}
where $(a)$ follows by Lemma~\ref{lem:ratio} below.
\end{proof}
\begin{lemma}[\cite{Cover--Thomas2006}, Lemma 16.7.1]
\label{lem:ratio}
Let $a_1,\ldots,a_n,b_1,\ldots,b_n$ be nonnegative real numbers. Then, defining $0/0=0$, we have
\begin{align*}
\frac{\sum_{i=1}^n a_i}{\sum_{i=1}^n b_i} \le \max_{j=1,\ldots,n}\frac{a_j}{b_j}.
\end{align*}
\end{lemma}

A direct implication of this statement is that if a probability assignment $q$ is universal with respect to $\Pc$ for the log-loss prediction problem, then the induced portfolio $\phi(q)$ is universal with respect to $\phi(\Pc)$.
If we consider the class of all \iid probabilities $\Pc^{\otimes}$, it is well known that the Laplace probability assignment $q_{\Lap}(y^n)\defeq \int_{\Bc} \mu(\thb)p_{\thb}(y^n)\diff \thb$ is universal for $\Pc^{\otimes}$, where $\mu(\thb)$ is the uniform density over $\Bc$ and $p_{\thb}(y^n)$ is the \iid probability with parameter $\thb=(\th_1,\ldots,\th_m)\in\Bc$, \ie $p_{\thb}(y^n)\defeq \prod_{i=1}^n \th_{y_n}=\prod_{j=1}^m \th_j^{k_j}$ with $k_i = |\{t\suchthat y_t = i\}|$.\footnote{We remark that while the Krichevsky--Trofimov (KT) probability assignment $\qkt$ is universal with an optimal constant in the regret, we consider $q_{\Lap}$ for simplicity throughout this paper.
}
Indeed, we have:
\begin{lemma}[\cite{Cesa-Bianchi--Lugosi2006}, Chapter 9] \label{lem:LaplaceProbAssgnGuarantee}
\[
\sup_{\thb \in \Bc} \sup_{y^n \in [m]^n} \log \frac{p_{\thb}(y^n)}{q_{\Lap}(y^n)} \le m \log n.
\]
\end{lemma}
Hence, $\phi(q_{\Lap})$ is a universal portfolio for $\Ac^{\mathsf{CRP}}=\phi(\Pc^{\otimes})$---this is \citeauthor{Cover1991}'s universal portfolio.
We remark that the universal portfolio $\phi(q_{\Lap})$ can be expressed as
\[
\phi(q_{\Lap})(\cdot|\xv^{t-1})=\frac{\int_{\Bc}\thb S_{t-1}(\thb,\xv^{t-1})\mu(\thb)\diff\thb}{\int_{\Bc} S_{t-1}(\thb,\xv^{t-1})\mu(\thb)\diff\thb},
\]
and is thus also known as the $\mu$-weighted portfolio.

\subsection{Universal Portfolios with Discrete Side Information}
Let us now consider a scenario at each time $t$, the investor is additionally given a discrete side information $w_t\in [S]$ for some $S\ge 1$ and chooses a portfolio $a(\cdot|\xv^{t-1};w^t)\in\Bc$, as considered by~\citet{Cover--Ordentlich1996}. 
Since the investor's multiplicative wealth gain is $\sum_{y\in[m]} a(y|\xv^{t-1};w^t) \xv_t(y)$, 
similar to the no-side-information setting, the cumulative wealth factor is
\begin{align*}
S_n(a,\xb^n;w^n)
&\defeq \prod_{t=1}^n \sum_{j\in[m]} a(j|\xb^{t-1};w^t) x_{tj}
\numberthis
\end{align*}
and we define the worst-case regret as
\[
\RegPor_n(b,\Ac;w^n)\defeq\sup_{a\in\Ac}\sup_{\xv^n}\log\frac{S_n(a,\xv^n;w^n)}{S_n(b,\xv^n;w^n)}
\]
for a class $\Ac$ of portfolios  that also adapt to $w^n$.
Concretely, as a natural extension of CRPs, we consider a class of state-constant rebalanced portfolios (state-CRPs), denoted as $\Ac_S^{\mathsf{CRP}}$, where a state-CRP parameterized by a $S$-tuple $(\thb_1,\ldots,\thb_S)\in\Bc^S$ plays a portfolio $\thb_{w_t}$ at each time $t$.

Paralleling the connection between probability and portfolio in the no-side-information case, we can also define a probability induced portfolio in this setting. In the log-loss prediction with a causal side information sequence, a learner is asked to assign a probability $p(\cdot|y^{t-1};w^t)$ over $[m]$ based on the causal information, \ie past sequence $y^{t-1}$ and the side information sequence $w^t$. Here, we use $p(y^n\|w^n)\defeq \prod_{t=1}^n p(y_t|y^{t-1};w^t)$
to denote the joint probability over $y^n$ given $w^n$.
The probability induced portfolio $a=\phi(p)$ is then defined as
\[
a(j|\xv^{t-1};w^t) \defeq \frac{\sum_{y^{t-1}} p(y^{t-1}j\|w^{t})\xv(y^{t-1})}{\sum_{y^{t-1}} p(y^{t-1}\|w^{t-1})\xv(y^{t-1})},
\numberthis
\label{eq:def_prob_induced_portfolio_with_side_info}
\]
and as in the no-side information setting, we can write
\[
S_n(\phi(p),\xv^n;w^n) = \sum_{y^n} p(y^n\|w^n)\xv(y^n).
\label{eq:wealth_achieved_by_prob_induced_portfolio_side_info}
\]
For example, the class of $S$-state-CRPs $\Bc_S^{\mathsf{CRP}}$ is induced by the class of all $S$-state \iid probabilities  $\Pc_S^{\otimes}$, \ie $\Bc_S^{\mathsf{CRP}}=\phi(\Pc_S^{\otimes})$. To see this, note that every $S$-state-CRP parameterized by $\thb_{1:S}=(\thb_1,\ldots,\thb_S)$ is the portfolio induced by the state-wise \iid probability assignment
$p_{\thb_{1:S}}(y^n\|w^n) 
\defeq \prod_{t=1}^n p_{\thb_{w_t}}(y_t)$.

Moreover, as stated in Proposition~\ref{prop:portfolio_is_not_harder}, solving the log-loss prediction problem suffices for the probability optimization with side information with respect to a class of probability induced portfolios.
\begin{proposition}
\label{prop:portfolio_probability_side_information}
For any probability assignment $q$ and any class of probability assignment schemes $\Pc$ with side information sequence $w^n$, we have $\RegPor_n(\phi(q),\phi(\Pc);w^n)\le \RegProb_n(q,\Pc;w^n)$, where we define
\[
\RegProb_n(q,\Pc;w^n)\defeq
\sup_{p\in\Pc} \max_{y^n} \log \frac{p(y^n\|w^n)}{q(y^n\|w^n)}.
\]
\end{proposition}
Note that for the class of $S$-state-wise \iid distributions $\Pc_S^{\otimes}$, the state-wise extension of the Laplace probability assignment $q_{\Lap;S}$ that assigns 
\begin{align}\label{eq:LaplaceStatewiseDefn}
q_{\Lap;S}(y^n\|w^n) \defeq \prod_{s=1}^S q_{\Lap}(y^n(s;w^n)),
\end{align}
where $y^n(s;w^n)=(y_i\suchthat w_i=s,i\in[n])$, is universal, and so $\phi(q_{\Lap;S})$ is universal for $\Ac_S^{\mathsf{CRP}}=\phi(\Pc_S^{\otimes})$---this is \citeauthor{Cover--Ordentlich1996}'s universal portfolio.

\section{Main Results}
\label{sec:main}

\subsection{Universal Portfolios with Continuous Side Information}
We now consider our main setting where a side information sequence $z^n\in\Zc^n$ is continuous-valued.
For example, in this setup, one may take $z_t$ as a suffix of the market history $\xv_{t-k}^{t-1}$ for some $k\ge 1$. 
As described earlier in the introduction, we aim to design a universal portfolio that competes against a class of state-CRPs that adapts to the sequence $g(w^n)$, where $g$ is a state function $g\suchthat \Zc\to [S]$ assumed to belong to a class of functions $\Gc$. Note that a singleton $\Gc=\{g\}$ recovers the setting of \citet{Cover--Ordentlich1996}.
Our goal is to design a portfolio that is universal for a largest possible $\Gc$ with a minimal assumption on the side information sequence.
In this paper, we will assume that the \emph{Natarajan dimension}~\citep{Shalev-Shwartz--Ben-David2014} of $\Gc$, denoted as $\mathrm{Ndim}(\Gc)$, is finite. The Natarajan dimension can be seen as a generalization of the classic VC dimension, when the function class under consideration is not binary. 

Leveraging the established connection between probability and portfolio, we continue to view the class of state-wise CRPs $\Bc_S^{\mathsf{CRP}}=\phi(\Pc_S^{\otimes})$ as the class of portfolios induced by $\Pc_S^{\otimes}$ and describe the problem in an abstract setting.
For a class of probability induced portfolios with (discrete) side information $\Ac=\phi(\Pc)$ and a class of state functions $\Gc$, our goal is to design a strategy $b$ that achieves a sublinear worst-case regret
\[
\RegPor_n(b;\Ac,\Gc;\xv^n,z^n) \defeq \sup_{g\in\Gc}\sup_{a\in\Ac} 
 \log\frac{S(a,\xv^n;g(z^n))}{S(b,\xv^n;z^n)}.
\]
Similar to the universal portfolios with discrete side information, a universal portfolio can be readily induced by a universal probability with respect to a continuous side information sequence with an unknown state function, based on the following statement.
\begin{proposition} \label{thm:PortfolioIsEasierSideInfo}
For any $\xv^n$ and $z^n$, we have
\[
\RegPor_n(\phi(q);\phi(\Pc),\Gc;\xv^n,z^n) \le \RegProb_n(q;\Pc,\Gc;z^n),
\]
where
\[
\RegProb_n(q;\Pc,\Gc;z^n)\defeq \sup_{g\in\Gc} \sup_{p\in\Pc} \max_{y^n} \log \frac{p(y^n\|g(z^n))}{q(y^n\|z^n)}.
\]
\end{proposition}
In this work, we specifically plug-in an extended version of the universal probability assignment $q_{\Gc}^*$ proposed by~\cite{Bhatt--Kim2021}, which was designed for $m=2,S=2$ with regret guarantee established when $y^n$ is random and the side information sequence $Z^n$ is \iid. We will extend their scheme for arbitrary $m$ and $S$ with a guarantee for adversarial $y^n$ and non-\iid $Z^n$.


Below, we further assume that a side information sequence $Z^n$ is stochastic with distribution $P_{Z^n}$ which may be arbitrarily correlated with the stock market $\Xv^n$; the universality is established with respect to the expected worst-case regret
\[
\ERegPor_n(b;\Ac,\Gc)\defeq \E\bigl[\RegPor_n(b;\Ac,\Gc;\Xv^n,Z^n)\bigr],
\]
where the expectation is over a joint distribution $\Pb_{\Xv^n,Z^n}$. 
We remark that it is unclear whether the required stochastic assumptions on $Z^n$ in Theorem~\ref{thm:AsymptoticConsistency} are an artifact of our analysis or whether they can be completely removed and universality can be established for individual sequences $z^n$. 
We leave this question for future work; see also Section~\ref{sec:related}. 

\paragraph{Proposed Strategy}

Firstly, for any $\nt \in \mathbb{N}$ and any $\zt^{\nt} \in \Zc^{\nt}$,  let $\{\gt_1,\dotsc,\gt_{\ell}\} \subset \Gc$ be a \emph{minimal empirical covering} of $\Gc$ with respect to $\zt^{\nt}$, \ie a set of functions such that 
$\{\gt_i(\zt^{\nt})\suchthat i \in [\ell]\} = \{g(\zt^{\nt})\suchthat g \in \Gc\}$ with the minimum possible size $\ell=\ell(\zt^{\nt})$. 
Then, we define a mixture probability assignment
\[
q_{\Gc;\zt^{\nt}}(y^i\|z^i) \defeq \frac{1}{\ell}\sum_{j=1}^{\ell} q_{\Lap;S}(y^i\|\gt_j(z^i))
\numberthis\label{eq:empirical_covering_mixture}
\]
with respect to the empirical covering,
and define the induced sequential probability assignment 
\[
q_{\Gc;\zt^{\nt}}(y_i|y^{i-1};z^i) \defeq \frac{q_{\Gc;\zt^{\nt}}(y^i\|z^i)}{q_{\Gc;\zt^{\nt}}(y^{i-1}\|z^{i-1})}.
\]
The proposed probability assignment $q_{\Gc}^*$ is then defined as follows. First, we split the $n$ time steps into $\lceil \log_2 n \rceil$ epochs: starting from $j = 1$, define the $j$-the epoch to consist of the time steps $2^{j-1} + 1 \le i \le 2^j$. So, the first epoch consists of $z_2$, the second epoch consists of $z_3^4$, the third epoch consists of $z_5^8$ and so on. Then,
\begin{itemize} \itemsep 0.2em
\item For $i=1$, $q_{\Gc}^*(\cdot|z_1) \defeq 1/m$;
\item For $i \ge 2$, if $2^{j-1} + 1 \le i \le 2^j$, \ie if the time step $i$ falls within the $j$-th epoch, then
\begin{align*} 
     q_\Gc^*(y_i|y^{i-1};z^i) \defeq \frac{q_{\Gc;z^{2^{j-1}}}(y_{2^{j-1}+1}^{i}\|z_{2^{j-1}+1}^i)}{q_{\Gc;z^{2^{j-1}}}(y_{2^{j-1}+1}^{i-1}\|z_{2^{j-1}+1}^{i-1})},
\end{align*}
where we define $q_{\Gc;z^{2^{j-1}}}(\emptyset\|\emptyset)=1$ by convention.
\end{itemize}
Concretely, the probability assigned over $y^n$ given $z^n$ for some $n\in (2^{J-1}, 2^J]$ is
\begin{align*}
q_{\Gc}^*(y^n\|z^n) 
&= \prod_{i=1}^n q_{\Gc}^*(y_i|y^{i-1};z^i)\\
&= q_{\Gc;\emptyset}(y_1\|z_1) q_{\Gc;z_1}(y_2\|z_2) q_{\Gc;z^2}(y_3^4\|z_3^4) 
\cdots 
q_{\Gc;z^{2^{J-1}}}(y_{2^{J-1}+1}^{n}\|z_{2^{J-1}+1}^{n}).
\numberthis\label{eq:assigned_probability}
\end{align*}
Finally, we obtain a sequential portfolio $a = \phi(q_\Gc^*)$ via the expression~\eqref{eq:def_prob_induced_portfolio_with_side_info}.

While the main focus of this paper is to construct a provably universal portfolio with continuous side information, we also include a discussion on its simulation in Appendix~\ref{sec:simulation}.

\subsection{Performance Guarantee and Examples}
Given a class of $S$-state functions $\Gc$, we need to impose a structural condition on the side information sequence $Z^n\sim P_{Z^n}$ as a stochastic process.
For any binary function class $\Hc \subset \{\Zc \to \{0,1\}\}$, we define 
\begin{align}\label{eq:rhodefn}
    \rho_{\Hc}(Z^n) = \sup_{h \in \Hc}\Bigl| \sum_{i=1}^n \bigl(h(Z_i) - \E[h(Z_i)]\bigr)\Bigr|,
\end{align}
which is a well-studied quantity in the empirical process theory.
Specifically, we are interested in the binary function class $\indi{\Gc\times\Gc}\defeq \{h\suchthat\Zc\to\{0,1\}\suchthat h(z)=\indi{(g(z)\neq g'(z))} \text{ for $g,g'\in\Gc$}\}$.
With a slight abuse of notation, we use $\rho_{\Gc\times\Gc}(Z^n)$ to denote $\rho_{\indi{\Gc\times\Gc}}(Z^n)$.

Now we can state our main result.



\begin{theorem}[Asymptotic universality]
\label{thm:AsymptoticConsistency}
For any collection of functions $\Gc$ of finite Natarajan dimension and any stationary stochastic process $Z^n$ such that \begin{align}\label{eq:ConsistencyCondition}
\E[\rho_{\Gc \times \Gc}(Z^n)]=o\Bigl(\frac{n}{\log ^2 n}\Bigr),
\end{align}
the induced portfolio $\phi(q_{\Gc}^*)$ satisfies
\[
\lim_{n\to\infty}\frac{1}{n} \ERegPor(\phi(q_{\Gc}^*), \Ac_S^{\mathsf{CRP}}, \Gc) = 0.
\]
\end{theorem}

In Theorem~\ref{thm:AsymptoticConsistency}, the condition $\E[\rho_{\Gc \times \Gc}(Z^n)] \ll \frac{n}{\log^2 n}$ on the marginal distribution $P_{Z^n}$ is crucial in ensuring consistency of the portfolio $\phi(q_{\Gc}^*)$.
We now provide a few example cases of side information sequences $Z^n$ where this requirement is satisfied.
\begin{example}[\iid processes]\label{example:iidprocesses}
When the joint distribution $P_{\Xv^n,Z^n}$ is such that $Z^n$ is \iid, it is well known that $\E[\rho_{\Hc}(Z^n)] \le C\sqrt{\text{VCdim}(\Hc)n}$ (where $C$ is an absolute constant) for any binary class $\Hc$ and distribution $P_{Z^n}$; see~\citet[Theorem 8.3.23]{Vershynin2018}. 
Following the same logic\footnote{The only change to be made in the proof is in the growth function---rather than $\left(\frac{en}{d}\right)^d$, the growth function in this case is $\le (S^2n)^{2d}$ by Natarajan's Lemma; see Section~\ref{subsec:NonCausalSideInfo}.}, it can be shown that  $\E[\rho_{\Gc \times \Gc}(Z^n)] \le C\sqrt{(d\log S) n}$ and consequently $\ERegPor$ is sublinear; in fact $\ERegPor = \widetilde{O}(\sqrt{n})$.
\end{example}

\begin{example}[$\b$-mixing processes]
\label{ex:beta_mixing}
The quantity $\E[\rho_{\Hc}(Z^n)]$ has also been studied for classes beyond \iid sequences---in particular, \citet{Yu1994} studied the case when $Z^n$ is \emph{$\b$-mixing}, which we now define. For the sigma-fields $\sigma_l \defeq \sigma(Z_1,\dotsc,Z_{\ell})$ and $\sigma_{l+k}' \defeq \sigma(Z_{\ell+k},Z_{\ell+k+1},\dotsc,)$, we define 
\begin{align*}
    \beta_{k} \defeq \frac{1}{2}\sup\{\E|P(B|\sigma_l) - P(B)|\suchthat B \in \sigma'_{\ell+k}, \ell \ge 1\}
\end{align*}
and if 
$\beta_k = O(k^{-r_{\b}})$ as $k\to\infty$,
$r_{\b}$ is called the \emph{$\b$-mixing exponent}. Note that a larger $r_\b$ guarantees faster mixing. 
We can then restate the main result of~\citet{Yu1994} for the case when $\Hc$ has a finite VC dimension~\cite[Chapter 5]{Shalev-Shwartz--Ben-David2014}.

\begin{theorem}[{\citealp[Corollary 3.2 and Remark (i)]{Yu1994}}]
Assume that a class of binary functions $\Hc$ is of finite VC dimension.
Let $Z^n$ be a stationary $\b$-mixing sequence with $\b$-mixing exponent $r_\b\in (0,1]$. 
Then, for any given $s\in(0,r_{\b})$, we have
\[
n^{s/(1+s)}\frac{\rho_{\Hc}(Z^n)}{n} \stackrel{p}{\longrightarrow} 0 \quad\text{as $n\to\infty$},
\numberthis
\label{eq:BetaMixingVCRate}
\]
where $\stackrel{p}{\longrightarrow}$ denotes convergence in probability.
\end{theorem}
Note that~\eqref{eq:BetaMixingVCRate} immediately implies that $\frac{1}{n} \RegPor \stackrel{p}{\longrightarrow} 0$, \ie $\phi(q_{\Gc}^*)$ is universal in probability. Converting \eqref{eq:BetaMixingVCRate} into a guarantee for convergence in expectation as required in \eqref{eq:ConsistencyCondition} needs an additional argument. In particular, we can show that $\E[\rho_{\Hc}(Z^n)] = O(n^{(3+r_{\b})/(3+2r_{\b}}))$; see Appendix~\ref{app:ex:beta_mixing} for a detailed proof.
\end{example}

\begin{example}[Market history $z_t = \xb^{t-1}_{t-k}$]
\label{ex:canonical}
A canonical example of side information is the market history $z_t=\xv^{t-1}$ or a truncated version of it with memory size $k$, \ie $z_t = \xv^{t-1}_{t-k}$. 
In this case, if the stock market $(\xv_t)$ itself is $k$-th order Markov, then under an additional mild regularity condition, we can show a faster rate $\ERegPor \le \widetilde{O}(\sqrt{n})$ than that implied by the previous example; see Appendix~\ref{app:example_three} for the statement.
\end{example}

\section{Proofs}
\label{sec:proofs}
In this section, we prove Theorem~\ref{thm:AsymptoticConsistency}.
We first note that the probability assignment $q_{\Gc}^*$ used to derive the proposed portfolio guarantees the following regret bound.
\begin{theorem}\label{thm:prob_assign}
For the probability assignment $q_{\Gc}^*$, if the Natarajan dimension $\mathrm{Ndim}(\Gc)=d$ of $\Gc$ is finite and $Z^n \sim P_{Z^n}$ is stationary, we have\footnote{Here, $\log n$ is assumed to be an integer for simplicity, which can be easily rectified at the cost of an absolute constant factor in the regret; see Section~\ref{subsec:ProbAssgnDefn}.}
\begin{align}
&\E\Bigl[\sup_{g\in\Gc}  \sup_{p\in\Pc_S^{\otimes}} \sup_{y^n \in [m]^n} \log \frac{p(y^n\|g(Z^n))}{q_{\Gc}^*(y^n\|Z^n)} \Bigr] 
\le S(d+m)(\log^2n) + 
2.5Sm \sum_{j=0}^{ \log n-1 } j\E[\rho_{\Gc \times \Gc}(Z^{2^{j}})].\nonumber
\end{align}
\end{theorem}
We will first prove Theorem~\ref{thm:prob_assign} and Theorem~\ref{thm:AsymptoticConsistency} then follows almost as a corollary of Theorem~\ref{thm:prob_assign} via the established connection between a probability and the induced portfolio in Proposition~\ref{thm:PortfolioIsEasierSideInfo}.

\subsection{Proof of Theorem~\ref{thm:prob_assign}}
Note that the key building block of the proposed probability assignment scheme $q_{\Gc}^*$ is $q_{\zt^n}(y^i\|z^i)$ defined in \eqref{eq:empirical_covering_mixture}, the uniform mixture based on a minimal empirical covering of $\Gc$ with respect to $\zt^n$. 
The proof consists of three steps.
In Step 1, we first consider the simplest case where the whole side information sequence $z^n$ is provided \emph{noncausally} by an oracle, where we can use $z^n$ as $\zt^n$ to build the empirical covering. We then analyze the performance of $q_{\zt^n}(y^i\|z^i)$ for an arbitrary auxiliary sequence $\zt^n$ in Step 2. Finally, in Step 3, we analyze $q_{\Gc}^*$ based on the analysis of $q_{\zt^n}(y^i\|z^i)$.
\subsubsection*{Step 1. Side Information Given Noncausally}
\label{subsec:NonCausalSideInfo}
Suppose that $z^n$ is available noncausally so that it can be used to construct a minimal empirical covering in $q_{z^n}(y^i\|z^i)$ for $i\in[n]$.
First, note that since $|\{(g(z^n)\suchthat g \in \Gc\}| \le S^n$, we can construct an empirical covering $\{g_1,\ldots,g_\ell\}$ of $\Gc$ with respect to $z^n$ with $\ell \le S^n$. 
Assuming Ndim$(\Gc) = d < \infty$, however, we can even do so with $\ell \le (S^2n)^d$ by Natarajan's Lemma~\cite[Lemma 29.4]{Shalev-Shwartz--Ben-David2014}. 
Hence, for the mixture probability assignment $q_{\zt^n}(y^i\|z^i)$ defined in \eqref{eq:empirical_covering_mixture} with $\zt^n\gets z^n$, \ie
\[
q_{z^n}(y^i\|z^i) = \frac{1}{\ell}\sum_{j=1}^{\ell} q_{\Lap;S}(y^i\|g_j(z^i)),
\]
it readily follows that for any $g\in\Gc$,
\begin{align*}
\sup_{p \in P_S^{\otimes}}\sup_{y^n\in[m]^n} \hspace{-0.4em} \log &\frac{p(y^n\|g(z^n))}{q_{z^n}(y^n\|z^n)} 
\le d \log(S^2n) + Sm \log n\numberthis\label{eq:bound_simplest_case}
\end{align*}
by invoking that $\ell \le (S^2n)^d$ and applying the regret bound for the $m$-ary Laplace probability assignment in Lemma~\ref{lem:LaplaceProbAssgnGuarantee} for each state.


\subsubsection*{Step 2. Auxiliary Side Information Given Noncausally}
\label{subsec:AuxSideInfo}

We now analyze the mixture probability $q_{\zt^n}(y^n\|z^n)$ for an arbitrary auxiliary sequence $\zt^n$, possibly being different from $z^n$.
Intuitively, the sequence $\zt^n$ will also reduce the class $\Gc$ to at most $(S^2n)^d$ functions, and 
if $z^n$ and $\zt^n$ are ``not too far apart'', the two reductions each obtained by $z^n$ and $\zt^n$ may be also close. 
The following lemma provides the performance of the mixture probability $q_{\zt^n}(y^n\|z^n)$ with respect to the auxiliary sequence $\zt^n$, capturing the expected gap from the intuition by the Hamming distance (denoted by $\dham$) between $g(z^n)$ and $\gt(z^n)$.

\begin{lemma}\label{lem:AuxSideInfoNoncausally}
For any $\zt^n$, $z^n$, and $g\in\Gc$ with $\mathrm{Ndim}(\Gc)=d<\infty$, we have 
\begin{align}
\sup_{p \in \Pc_S^{\otimes}} \sup_{y^n \in [m]}  \log \frac{p(y^n\|g(z^n))}{q_{\zt^n}(y^n\|z^n)}
&\le d\log(S^2n) + Sm(\log n)(1+2.5d_H(g(z^n),\gt(z^n)))
\nonumber\\&
\le S(\log n)(
d+m+2.5m \dham(g(z^n),\gt(z^n))).
\end{align}
\end{lemma}
Note that setting $\dham(g(z^n),\gt(z^n))=0$ recovers \eqref{eq:bound_simplest_case} as expected.

\begin{proof}
Let $p_{\thb_{1:S}}$ be a state-wise \iid probability assignment characterized by $\thb_{1:S}=(\thb_1,\ldots,\thb_S)\in\Bc^S$, where $\thb_i = (\th_{i1},\th_{i2},\dotsc,\th_{im}) \in \Bc$ for each $i \in [S]$.
For any state function $g\in\Gc$, by definition of the empirical covering, there exists a function $\gt \in \{\gt_1,\dotsc,\gt_{\ell}\}$ such that $\gt(\zt^n) = g(\zt^n)$. Hence, we first have
\begin{align}
&\log \frac{p_{\thb_{1:S}}(y^n\|g(z^n))}{q_{\zt^n}(y^n\|z^n)} 
\le d \log (S^2n) + \log \frac{p_{\thb_{1:S}}(y^n\|g(z^n))}{q_{\Lap;S}(y^n\|\gt(z^n))}.
\label{eq:tmp}
\end{align}
It only remains to analyze $q_{\Lap;S}(y^n\|\gt(z^n))$.
For each $i \in [S]$ and $j \in [m]$, we define $n_i \defeq |t\suchthat g(Z_t) = i|$ and $k_{ij} \defeq |t\suchthat g(Z_t) = i, y_t = j|$. Moreover let $\nt_i, \kt_{ij}$ be defined in a similar way as  $\nt_i \defeq |t\suchthat \gt(Z_t) = i|$ and $\kt_{ij} \defeq |t\suchthat \gt(Z_t) = i, y_t = j|$). 
We can then write
\[
p_{\thb_{1:S}}(y^n\|g(z^n)) = \prod_{s=1}^S \th_{s1}^{k_{s1}}\dotsc \th_{sm}^{k_{sm}}.
\]

Further, we note that we can explicitly write the expression for the Laplace probability assignment as
\begin{align*}
q_{\Lap}(y^n) = \left(\binom{n+m-1}{m-1} \binom{n}{k_1,\dotsc,k_m}\right)^{-1},
\end{align*}
where $k_i = |\{t\suchthat y_t = i\}|$, and thus its state-wise extension as
\begin{align*}
q_{\Lap;S}(y^n\|\gt(z^n)) 
&= \Bigl(\prod_{s=1}^S \binom{\nt_s+m-1}{m-1}\binom{\nt_s}{\kt_{s1},\dotsc,\kt_{s,m-1}}\Bigr)^{-1}.
\end{align*}
Now, consider
\begin{align}
\log \frac{p_{\thb_{1:S}}(y^n\|g(z^n))}{q_{\Lap;S}(y^n\|\gt(\zt^n))}
&= \sum_{i=1}^S \log \binom{\nt_i+m-1}{m-1}\binom{\nt_i}{\kt_{i1},\dotsc,\kt_{i,m-1}}\th_{i1}^{k_{i1}}\dotsc \th_{im}^{k_{im}} \nonumber\\ 
&\le Sm \log n + \sum_{i=1}^S \log \frac{\binom{\nt_i}{\kt_{i1},\dotsc,\kt_{i,m-1}}}{\binom{n_i}{k_{i1},\dotsc,k_{i,m-1}}} \label{eq:multinomleone}\\
&= Sm \log n + \sum_{i=1}^S \log \frac{\nt_i!}{n_i!} + \sum_{i=1}^S \sum_{j = 1}^m \log \frac{k_{ij}!}{\kt_{ij}}, \label{eq:UseFactorialIneq}
\end{align}
where~\eqref{eq:multinomleone} follows since 
\[
\binom{n_i}{k_{i1},\dotsc,k_{i,m-1}} \th_{i1}^{k_{i1}}\dotsc \th_{im}^{k_{im}} \le 1.
\]

Now, since for all $i \in [S]$ and $j \in [m]$, we have $|n_i - \nt_i| \le \dham(g(z^n),\gt(z^n))$ and $|k_{ij}-\kt_{ij}| \le \dham(g(z^n),\gt(z^n))$, we have that $\nt_i \le n_i + \dham(g(z^n),\gt(z^n))$ and consequently $\frac{\nt_i!}{n_i!} \le \frac{(n_i+\dham(g(z^n),\gt(z^n)))!}{n_i!}$. Thus, we can invoke the exact same calculations as in~\cite[Propositions 5 and 6]{Bhatt--Kim2021} to bound the second and third terms in~\eqref{eq:UseFactorialIneq} as
\begin{align}
\log \frac{p_{\thb_{1:S}}(y^n\|g(z^n))}{q_{\Lap;S}(y^n\|\gt(\zt^n))} 
&\le Sm \log n + S(m+3)\dham(g(z^n),\gt(z^n))\log n \nonumber \\
&\le Sm(\log n)(1+2.5d_H(g(z^n),\gt(z^n))),
\label{eq:HammDistRedundancy}
\end{align} 
since $m\ge 2$.
Plugging this into \eqref{eq:tmp} establishes the first bound. The second bound follows by observing $\log(S^2 n) \le S\log n$.
\end{proof}

When $Z^n$ is stationary as a stochastic process and if $\Zt^n$ is a statistical copy of $Z^n$, the following lemma shows that the Hamming distance can be bounded by $\rho_{\Gc\times\Gc}(Z^n)$, which can be controlled in expectation as $o(n/\log^2 n)$ under mild regularity conditions on $P_{Z^n}$ and $\Gc$.
\begin{lemma}\label{lem:stationary_hamming}
If $Z^n$ is stationary, $\Zt^n\stackrel{(d)}{=}Z^n$, and $\gt(\Zt^n)=\gt(\Zt^n)$, then
\[\dham(g(Z^n),\gt(Z^n))\le \rho_{\Gc \times \Gc}(Z^n) + \rho_{\Gc \times \Gc}(\Zt^n).\]
\end{lemma}
\begin{proof}
Note that for any $Z^n$ and $\Zt^n$, we can write
\begin{align}
\dham(g(Z^n),\gt(Z^n))
&= \dham(g(Z^n),\gt(Z^n)) - \dham(g(\Zt^n),\gt(\Zt^n)) \label{eq:ByDesignZt} \\
&\le \sup_{g_1,g_2} \left| \dham(g_1(Z^n),g_2(Z^n)) - \dham(g_1(\Zt^n),g_2(\Zt^n)) \right| \nonumber \\
&\le \sup_{g_1,g_2} \left| \dham(g_1(Z^n),g_2(Z^n)) - n\P(g_1(Z_1) \neq g_2(Z_2))\right|  \nonumber\\
&\quad+ \sup_{g_1,g_2} \left| \dham(g_1(\Zt^n),g_2(\Zt^n)) - n\P(g_1(\Zt_1) \neq g_2(\Zt_1))\right| \nonumber\\
&=  \rho_{\Gc \times \Gc}(z^n) + \rho_{\Gc \times \Gc}(\zt^n) \label{eq:UseStationarityOfZ}
\end{align}
where~\eqref{eq:ByDesignZt} follows since $\dham(g(\Zt^n),\gt(\Zt^n)) = 0$ by design and \eqref{eq:UseStationarityOfZ} follows since by stationarity of $Z^n \stackrel{(d)}{=} \Zt^n$, we have $n\P(g_1(Z_1) \neq g_2(Z_1)) = n\P(g_1(\Zt_1) \neq g_2(\Zt_1)) = \sum_{i=1}^n \P(g_1(\Zt_i) \neq g_2(\Zt_i)) =  \sum_{i=1}^n \E[\indi{g_1(\Zt_i) \neq g_2(\Zt_i)}]$. Finally, substituting~\eqref{eq:UseStationarityOfZ} into~\eqref{eq:HammDistRedundancy} yields the lemma.
\end{proof}

\subsubsection*{Step 3. Side Information Given Causally}
\label{subsec:ProbAssgnDefn}
In view of Lemma~\ref{lem:stationary_hamming}, provided that $Z^n$ is stationary, we can \emph{bootstrap} the history sequence to construct such an auxiliary sequence, which motivates the epoch-based construction of $q_{\Gc}^*$.
That is, we split the $n$ time steps into $ \log n $ epochs\footnote{For simplicity, we assume that $\log n$ is an integer; if not, we may ``extend'' the horizon of the game from $n$ to $2^{\lceil \log n \rceil} < 2n$, and follow the same analysis incurring at most a constant factor extra in the regret bound.}, and define the $j$-the epoch to consist of the time steps $2^{j-1} + 1 \le i \le 2^j$ starting from $j = 1$, while we define $q_{\Gc}^*(\cdot|Z_1) = 1/m$ for the 0-th epoch. 
For $i \ge 2$, if the time step $i$ falls within the $j$-th epoch, \ie $2^{j-1} + 1 \le i \le 2^j$, then
\begin{align} \label{eq:QStarInJEpochDef}
q_{\Gc}^*(y_i|y^{i-1};Z^i) = \frac{q_{Z^{2^{j-1}}}(y_{2^{j-1}+1}^{i}\|Z_{2^{j-1}+1}^i)}{q_{Z^{2^{j-1}}}(y_{2^{j-1}+1}^{i-1}\|Z_{2^{j-1}+1}^{i-1})}
\end{align}
where we can recall the definition of $q_{Z^{2^{j-1}}}$ from \eqref{eq:empirical_covering_mixture}. 
For any $p\in \Pc_S^{\otimes}$, we then have
\begin{align}
\sum_{i=1}^n \log \frac{p(y_i|g(Z_i))}{q_{\Gc}^*(y_i|y^{i-1};Z^i)}
&\le  \sum_{i=2}^n \log \frac{p(y_i|g(Z_i))}{q_{\Gc}^*(y_i|y^{i-1};Z^i)} + \log m \nonumber \\
&= \sum_{j=1}^{ \log n}\sum_{i=2^{j-1}+1}^{2^j}  \log \frac{p(y_i|g(Z_i))}{q_{\Gc}^*(y_i|y^{i-1};Z^i)}  \nonumber\\
&= \sum_{j=1}^{\log n} \log \frac{p(y_{2^{j-1}+1}^{2^j}\|g(Z_{2^{j-1}+1}^{2^j}))}{q_{Z^{2^{j-1}}}(y_{2^{j-1}+1}^{2^j}\|Z_{2^{j-1}+1}^{2^j})} \label{eq:QStarIsQaux}\\
&\le S(d+m) (\log^2n) 
+ 2.5Sm \sum_{j=0}^{\log n-1} jd_H(g(Z_{1}^{2^{j}}), g(Z_{2^j+1}^{2^{j+1}})),
\label{eq:UseAuxSideInfoLem}
\end{align}
where~\eqref{eq:QStarIsQaux} follows by~\eqref{eq:QStarInJEpochDef} and~\eqref{eq:UseAuxSideInfoLem} follows from Lemma~\ref{lem:AuxSideInfoNoncausally}. 
Finally, taking supremum over $y^n, p$ and $g$ and expectation over $Z^n$ leads to the desired inequality by Lemma~\ref{lem:stationary_hamming}.\qed

\subsection{Proof of Theorem~\ref{thm:AsymptoticConsistency}}
By Proposition~\ref{thm:PortfolioIsEasierSideInfo} and Theorem~\ref{thm:prob_assign}, we have
\begin{align*}
\ERegPor(\phi(q_{\Gc}^*), \Ac_S^{\mathsf{CRP}}, \Gc)
&=\E[\RegPor_n(\phi(q_{\Gc}^*), \Ac_S^{\mathsf{CRP}}, \Gc;\Xv^n,Z^n)]\\
&\le \E[\RegProb_n(q;\Pc,\Gc;Z^n)]\\
&= \E\Bigl[\sup_{g\in\Gc} \sup_{p\in\Pc} \max_{y^n} \log \frac{p(y^n\|g(Z^n))}{q(y^n\|Z^n)}\Bigr]\\
&\le S(d+m)(\log^2n) +2.5Sm\sum_{j=0}^{\log n-1} j\E[\rho(Z^{2^j})],
\end{align*}
where we omit the subscript in $\rho_{\Gc \times \Gc }(\cdot)$ for brevity. 
Since the first term in the bound is sublinear in $n$ when $d$ and $S$ are fixed, it then suffices to show that 
\begin{align}\label{eq:AsymptoticRegTwoTerms}
\sum_{j=0}^{ \log n -1} j\E[\rho(Z^{2^{j}})] =o(n).
\end{align} 
Using the change of variables $n' = \log n$, observe
\begin{align}
\sum_{j=0}^{ \log n -1} j\E[\rho(Z^{2^{j}})] 
&= \frac{1}{n'} \sum_{j=0}^{n'-1}
j\E[\rho(Z^{2^{j}})]\frac{n'}{2^{n'}} 
\le \frac{1}{n'} \sum_{j=0}^{n'-1}  \frac{j^2}{2^j}\E[\rho(Z^{2^j})],
\nonumber
\end{align}
where the inequality follows since $\frac{n'}{2^{n'}} \le \frac{j}{2^j}$ for all $j \le n'$. Now, since $\frac{(\log n)^2}{n}\E[\rho(Z^n)]=\frac{n'^2}{2^{n'}} \E[\rho(Z^{2^{n'}})] \to 0$ as $n\to\infty$ is assumed, we also have $\frac{1}{n'} \sum_{j=0}^{n'-1} \frac{j^2}{2^j}\E[\rho(Z^{2^j})] \to 0$ as $n' \to \infty$, by the Ces\`aro mean Theorem. A final change of variables concludes the proof. \qed

\section{Related Work and Discussion}
\label{sec:related}
Portfolio selection has been a closely studied topic in information theory since the seminal work of~\cite{Cover1991} and~\cite{Cover--Ordentlich1996},  both of which established close connections  between portfolio selection and the classically studied information theoretic problem of universal compression~\citep{Rissanen1996, Lempel--Ziv1978, Feder--Merhav1998, Xie--Barron2000}. A number of variations have been considered since, for example incorporating transaction costs~\citep{Blum--Kalai99, Uziel--El-Yaniv2020} using other probability assignments than \iid~\citep{Kozat--Singer--Bean2008, Tavory--Feder2010}, and considering space complexity issues~\citep{Tavory--Feder2008}.

\cite{Cross--Barron2003} and~\cite{Gyorfi--Lugosi--Udina2006} proposed  portfolio selection techniques incorporating continuous side information; however, the competitor classes considered in both are disparate from ours making the problems different. 

As demonstrated, portfolio selection with side information is closely related to sequential prediction with side information and log-loss. This problem has attracted recent interest~\citep{Rakhlin--Sridharan2015, Bilodeau--Foster--Roy2020, Fogel--Feder2017, Bhatt--Kim2021}, with the first two focused on obtaining fundamental limits via the sequential complexities approach of~\cite{Rakhlin--Sridharan--Tewari2015}. 
More recently, the preprint of~\citet{Bilodeau--Foster--Roy2021} 
proposed a mixture-based conditional density estimator,
which specifically achieves $\E[\RegProb]=O(\log^2 n)$ for the binary probability assignment problem with \iid side information with a VC class, which tightens the regret $\tilde{O}(\sqrt{n})$ established in \citep{Bhatt--Kim2021}.
Therefore, it is natural to consider applying the probability assignment of \citet{Bilodeau--Foster--Roy2021} in hoping to relax the technical condition~\eqref{eq:ConsistencyCondition}.
We note, however, that analyzing their method in our setting of non-\iid side information sequences seems to involve a significant amount of additional work. More precisely, their analysis needs to be extended to (1) individual-sequence $y^n$ and (2) stationary ergodic side information with a dependence of the regret on $\rho_{\Hc}(Z^n)$ similar to that of the method of~\citet{Bhatt--Kim2021}. In their words, we would need to relax the assumption of the data being \emph{well-specified}.

At a high level, they use a similar covering approach (with respect to the Hellinger metric over distributions) as well as a \emph{smoothing} of probabilities in order to avoid unbounded likelihood ratios (we, in contrast, have used the Laplace/KT probability assignment). Using a similar epoch-based analysis they establish regret bounds in \citep[Appendix~D]{Bilodeau--Foster--Roy2021} by first upper bounding the KL divergence in terms of the Hellinger divergence and then leveraging local Rademacher complexities in conjunction with an inequality of~\citet{Bousquet2002}. 
In order to extend their method to individual-sequence $y^n$ and stationary ergodic $Z^n$, one would need to either extend the aforementioned inequality to these cases, or to bypass the step of upper-bounding the KL divergence in terms of the Hellinger divergence altogether. 
We leave these directions of extension for future work.



\appendix

\section{Deferred Proofs}

\subsection{Proof of Propostions~\ref{prop:portfolio_probability_side_information} and~\ref{thm:PortfolioIsEasierSideInfo}}
It suffices to prove Proposition~\ref{thm:PortfolioIsEasierSideInfo}, since Proposition~\ref{prop:portfolio_probability_side_information} follows from it by taking $z^n = w^n$ and taking $|\Gc| = 1$ with the function $g \in \Gc$ being simply $g(z) = z$.  

Recall that for a probability assignment $q(y_i|y^{i-1};z^i)$, we have the probability induced portfolio $a = \phi(q)$ defined as 
\[
a(j|\xv^{t-1};z^t) \defeq \frac{\sum_{y^{t-1}} q(y^{t-1}j\|z^{t})\xv(y^{t-1})}{\sum_{y^{t-1}} q(y^{t-1}\|z^{t-1})\xv(y^{t-1})},
\]
where recall for $t \in [n], q(y^t\|z^t) = \prod_{i=1}^t q(y_i|y^{i-1};z^i)$. We then have 
\[
\sum_{y_t\in[m]} a(y_t|\xv^{t-1};z^t) \xv_t(y_t) = \frac{\sum_{y^{t-1}} q(y^{t}\|z^{t})\xv(y^{t})}{\sum_{y^{t-1}} q(y^{t-1}\|z^{t-1})\xv(y^{t-1})},
\]
and consequently using a telescoping argument, 
\[
S_n(\phi(q),\xv^n;z^n) = \sum_{y^n \in [m]^n} q(y^n\|z^n) \xv(y^n).
\]
From this we can see that
\begin{align}
\RegPor_n(\phi(q);\phi(\Pc),\Gc;\xv^n,z^n) &= \sup_{g\in\Gc}\sup_{p\in\Pc}
 \log\frac{S_n(\phi(p),\xv^n;g(z^n))}{S_n(\phi(q),\xv^n;z^n)} \nonumber\\
 &= \sup_{g\in\Gc}\sup_{p\in\Pc} \log\frac{\sum_{y^n \in [m]^n} p(y^n\|g(z^n))\xv(y^n)}{\sum_{y^n \in [m]^n} q(y^n\|z^n)\xv(y^n)} \nonumber\\
 &\le \sup_{g \in \Gc} \sup_{p \in \Pc} \max_{y^n \in [m]^n } \frac{p(y^n\|g(z^n))}{q(y^n\|z^n)} \label{eq:BasicLemma} \\
 &= \RegProb_n(q;\Pc,\Gc;z^n),\nonumber
\end{align}
where~\eqref{eq:BasicLemma} follows from Lemma~\ref{lem:ratio}.\qed


\subsection{Proof for \texorpdfstring{$\E[\rho_{\Hc}(Z^n)] = o(\frac{n}{\log n})$}{} in Example~\ref{ex:beta_mixing}}
\label{app:ex:beta_mixing}

Let $Z^n$ be a $\b$-mixing process with $\b$-mixing coefficient $\b_k$ and $\b$-mixing exponent $r>0$, \ie $\b_k = O(k^{-r})$ as $k\to\infty$. 
Indeed, under this assumption, we can show a stronger result $\E[\rho_{\Hc}(Z^n)] = O(n^{\frac{3+r}{3+2r}})$ following a similar argument as in \citet{Karandikar--Vidyasagar2002} and \citet{Hanneke--Yang2019}.

Pick $k\ge 1$ which divides $n$ for simplicity; the divisibility can be easily lifted by elongating the game from $n$ steps to the next number divisible by $k$. 
We will choose $k$ as a function of $n$ at the end of proof.
We define the nonoverlapping $k$ subsequences $Z^{(1)},\dotsc,Z^{(k)}$ of length $n/k$ as 
\begin{align*}
    {Z^{(1)}}_{1}^{n/k} &= Z_1,Z_{k+1},Z_{2k+1}\dotsc,Z_{(n/k -1)+1}, \\
    {Z^{(2)}}_{1}^{n/k} &= Z_2,Z_{k+2},Z_{2k+2}\dotsc,Z_{(n/k -1)k+2}, \\
    &\vdots\\
    {Z^{(k)}}_{1}^{n/k} &= Z_{k},Z_{2k},Z_{3k}\dotsc,Z_{(n/k)k}.
\end{align*}

We will invoke the classical result on $\b$-mixing processes that states that
\begin{align}\label{eq:TVDistBetaMixing}
    \dtv\biggl(P_{{Z^{(j)}}_{1}^{n/k}}, \prod_{i=1}^{n/k} P_{Z^{(j)}_i}\biggr) \le \left(\frac{n}{k}-1\right)\b_k
\end{align}
for each $j \in [k]$, where $\dtv(\cdot, \cdot)$ denotes the total variation distance; see, for example,~\cite[Lemma 1]{Hanneke--Yang2019} and the references therein.

Now, we consider
\begin{align}
    \E[\rho_{\Hc}(Z^n)]
    &= \E\Bigl[\sup_{h \in \Hc} \Bigl|\sum_{i=1}^n (h(Z_i)-E[h(Z_i)])\Bigr|\Bigr] \nonumber\\
    &\le \E\biggl[\sum_{j=1}^{k} \sup_{h \in \Hc} \Bigl|\sum_{i=1}^{n/k}(h(Z^{(j)}_i)-\E[h(Z^{(j)}_i)])\Bigr|\biggr] \nonumber\\
    &= \sum_{j=1}^{k} \E\biggl[\sup_{h \in \Hc} \Bigl|\sum_{i=1}^{n/k}(h(Z^{(j)}_i) - \E[h(Z^{(j)}_i)])\Bigr|\biggr].\label{eq:kSubsequences}
\end{align}
Let $Z'_1,\dotsc,Z'_{n/k}$ be an \iid process with the same marginal distribution of the stationary process $Z^n$, \ie $P_{Z'_1} = P_{Z_1}$. 
Continuing from the summand in \eqref{eq:kSubsequences}, we then have
\begin{align}
    &\E\biggl[\sup_{h \in \Hc} \Bigl|\sum_{i=1}^{n/k}(h(Z^{(1)}_i) - \E[h(Z^{(1)}_i)])\Bigr|\biggr] \\
    &=  \E\biggl[\sup_{h \in \Hc} \Bigl|\sum_{i=1}^{n/k}(h(Z^{(1)}_i) - h(Z'_i) + h(Z'_i) - \E[h(Z^{(1)}_i)])\Bigr|\biggr] \nonumber\\
    &= \E\biggl[\sup_{h \in \Hc} \Bigl|\sum_{i=1}^{n/k}(h(Z^{(1)}_i) - h(Z'_i) + h(Z'_i) - \E[h(Z'_i)])\Bigr|\biggr] \label{eq:ZPrimeSameDist} \\
    &\le \E\biggl[\sup_{h \in \Hc} \Bigl|\sum_{i=1}^{n/k}(h(Z^{(1)}_i) - h(Z'_i))\Bigr|\biggr] + \E\biggl[\sup_{h \in \Hc} \Bigl|\sum_{i=1}^{n/k}(h(Z'_i) - E[h(Z'_i)])\Bigr|\biggr] \\
    &\le \E\biggl[\sup_{h \in \Hc} \Bigl|\sum_{i=1}^{n/k}(h(Z^{(1)}_i) - h(Z'_i))\Bigr|\biggr] + C\sqrt{\frac{dn}{k}} \label{eq:ZPrimeIsIID}\\
    &\le \frac{n}{k}\sup_{h \in \Hc} \E\Biggl|\frac{k}{n}\sum_{i=1}^{n/k} h(Z^{(1)}_i) - \frac{k}{n}\sum_{i=1}^{n/k} h(Z'_i) \Biggr|       + C\sqrt{\frac{dn}{k}} \nonumber \\ 
    &\le \frac{n}{k} \dtv\biggl(P_{{Z^{(j)}}_{1}^{n/k}}, \prod_{i=1}^{n/k} P_{Z^{(j)}_i}\biggr) + C\sqrt{\frac{dn}{k}} \label{eq:TVdefn} \\
    &\le \frac{n^2\b_k}{k^2} + C\sqrt{\frac{dn}{k}}. \label{eq:UseBetaMixingTV}
\end{align}
Here, \eqref{eq:ZPrimeSameDist} follows since the marginal distribution $Z'_i \stackrel{(d)}{=} Z^{(1)}_i$, \eqref{eq:ZPrimeIsIID} follows since the distribution $Z'^n$ is \iid and from~\citep[Theorem 8.3.23]{Vershynin2018},~\eqref{eq:TVdefn} follows from the following variational form of the total variation distance $\dtv(P,P')$ between two measures $P$ and $P'$ defined over the same measure space, \ie
\[
\dtv(P,P') = \sup_{f\suchthat |f| \le 1}|\E_{X \sim P}[f(X)] - \E_{X \sim P'}[f(X)]|,
\]
and lastly \eqref{eq:UseBetaMixingTV} follows from \eqref{eq:TVDistBetaMixing}. 
Substituting \eqref{eq:UseBetaMixingTV} into \eqref{eq:kSubsequences} yields that 
\[
\E[\rho_{\Hc}(Z^n)] \le \frac{n^2\b_k}{k} + C\sqrt{dnk} \le \frac{C'n^2 k^{-r}}{k} +  C\sqrt{dnk}
\]
for $k$ sufficiently large with some $C'>0$,
where we use the definition of the $\b$-mixing exponent $r$ in the second inequality. 
Finally, choosing $k = O(n^{\frac{3}{3+2r}})$ yields the claimed rate $\E[\rho_{\Hc}(Z^n)] = O(n^{\frac{3+r}{3+2r}})$.
\qed

\section{A Detailed Discussion on Example~\ref{ex:canonical}}
\label{app:example_three}
   

For the side information $z_t=\xv_{t-k}^{t-1}$ in Example~\ref{ex:canonical}, if the market $(\Xv_t)$ itself is $k$-th order Markov, then we can establish the following guarantee.

\begin{lemma}\label{lem:MarkovMarket}
Let $\Xv^n$ be a stationary $k$-th order Markov process and let $Z_t = \Xv^{t-1}_{t-k}\in (\Real_+^m)^k$. 
Suppose that (1) the density of $Z_0=\Xv_{-k}^{-1}$ exists and is bounded and supported over a bounded, convex set $E \subset (\Real^{m}_{+})^k$ with nonempty interior and (2) there exist $b > 0$ and $\eps>0$ such that the time-invariant conditional density satisfies
\[
p_{\Xv_{t-k+1}^{t}|\Xv_{t-k}^{t-1}}(z'|z)\ge b1_{B(z,\eps)}(z')
\]
for any $z\in(\Real_+^m)^k$, where $B(z,\e)$ denotes the open ball of radius $\eps$ centered at $z\in(\Real_+^m)^k$ with respect to Euclidean distance. 
Then, we have $\E[\rho_{\Hc}(Z^n)] = \widetilde{O}(\sqrt{n})$.
\end{lemma}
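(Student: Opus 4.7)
The plan is to reduce the problem to the geometric $\beta$-mixing setting treated in the previous subsection. Since $\Xv^n$ is $k$-th order Markov, the side-information process $Z_t = \Xv_{t-k}^{t-1}$ is itself a stationary first-order Markov chain on $E \subset (\Real_+^m)^k$, with transition density given precisely by the time-invariant conditional density in hypothesis~(2). Hence it suffices to show that $(Z_t)$ is geometrically $\beta$-mixing, that is, $\b_j \le A e^{-\gamma j}$ for some constants $A,\gamma > 0$. Once this is in hand, the bound $\E[\rho_{\Hc}(Z^n)] \le \frac{n^2 \b_j}{j} + C\sqrt{dnj}$ derived in the preceding subsection, combined with the choice $j = c \log n$ for $c$ sufficiently large, makes the first term $O(\sqrt{n})$ and yields the desired $\widetilde{O}(\sqrt{n})$ rate.

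I would establish geometric $\beta$-mixing through a uniform Doeblin-type minorization on a finite iterate of the transition kernel of $(Z_t)$. The idea is to chain the hypothesized local minorization $p(z'|z)\ge b\,1_{B(z,\eps)}(z')$ along a straight-line path inside $E$: for any $z, z' \in E^\circ$, use the convexity of $E$ to interpolate along the line segment between $z$ and $z'$ by points $z = z_0, z_1, \ldots, z_K = z'$ spaced at distance strictly less than $\eps/2$, so that $K \le \lceil 2\,\mathrm{diam}(E)/\eps \rceil$ depends only on $\mathrm{diam}(E)$ and $\eps$. Restricting each intermediate state $w_i$ to a small ball of radius $\eps/4$ about $z_i$---which stays in $E$ by convexity when both endpoints lie sufficiently deep in the interior---the local minorization gives $p(w_{i+1}|w_i) \ge b$ at every step, and integrating over these balls yields a uniform lower bound $p_K(z'|z) \ge c_0 > 0$ on a suitable interior subset of $E$. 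This is exactly a Doeblin minorization for the $K$-step kernel, from which uniform geometric ergodicity and hence geometric $\beta$-mixing of $(Z_t)$ follow by standard Markov chain theory.

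The main obstacle is the chaining step: I need to simultaneously ensure that the interpolating balls remain inside $E$ where the minorization applies, and that the resulting constants $c_0$ and $K$ depend only on $\mathrm{diam}(E)$, $\eps$, and $b$, so that the rate $\gamma$ is a uniform constant independent of $n$. The convexity and nonempty interior hypotheses on $E$ are precisely what enable this---convexity ensures the line segment between any $z, z'$ lies in $E$, while the interior condition provides the room to inflate into small balls around each interpolating point. The boundedness of the density of $Z_0$ further guarantees that the thin boundary layer of $E$, where the chaining is qualified, carries only $o(1)$ mass and contributes at most a lower-order correction that can be absorbed into the constants. Combining this geometric minorization with the application of the $\beta$-mixing bound outlined in the first paragraph will then complete the proof.
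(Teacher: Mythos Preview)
Your approach is correct but takes a genuinely different route from the paper's proof. The paper does not go through $\beta$-mixing at all: it simply observes that $Z^n$ is a Markov chain and then invokes \cite[Proposition~11]{Bertail--Portier2019}, an off-the-shelf empirical-process bound for Metropolis--Hastings walks, by setting the MH proposal equal to the actual transition kernel of $Z^n$ so that the MH walk coincides with $Z^n$. The hypotheses (1) and (2) of the lemma are precisely the assumptions needed for that proposition, and the $\widetilde{O}(\sqrt{n})$ rate drops out directly.

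Your route is more elementary and self-contained: you chain the local minorization in hypothesis~(2) along straight-line paths in $E$ to obtain a Doeblin condition on a finite iterate of the kernel, deduce uniform geometric ergodicity (hence geometric $\beta$-mixing), and then recycle the inequality $\E[\rho_{\Hc}(Z^n)] \le \frac{n^2\beta_j}{j} + C\sqrt{dnj}$ already derived in Appendix~\ref{app:ex:beta_mixing} with $j \asymp \log n$. This has the advantage of keeping everything internal to the paper and making transparent \emph{why} the convexity and minorization hypotheses yield the fast rate, at the cost of having to verify the Doeblin chaining carefully. In particular, your treatment of the boundary layer is a bit loose: rather than arguing that the thin boundary carries $o(1)$ mass, it is cleaner to note that from any $z\in E$ one step of the chain lands in $B(z,\eps)\cap E^\circ$ with probability bounded below uniformly (by convexity and nonempty interior, that intersection has volume bounded away from zero), after which your chaining to a fixed interior ball applies; this yields that all of $E$ is a small set for a $(K+1)$-step kernel and gives the uniform minorization you need.
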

\begin{proof}
This is a direct consequence of \cite[Proposition 11]{Bertail--Portier2019}, which establishes an upper bound on $\E[\rho_{\Hc}({Z'}^n)]$ for a \emph{Metropolis--Hastings} (MH) walk ${Z'}^n$.
First, note that $Z^n$ forms a Markov chain due to the $k$-th order Markovity of $\Xv^n$.
To apply the proposition over the Markov chain $Z^n$, we set the proposal distribution $q$ in the MH algorithm to be the actual transition kernel of the Markov chain ${Z}^n$, so that the MH walk becomes the process $Z^n$ of our interest.
Then, under the assumptions above, we can apply the result of \citeauthor{Bertail--Portier2019} and conclude that $\E[\rho_{\Hc}({Z}^n)]=\tilde{O}(\sqrt{n})$ for a VC-class $\Hc$.
\end{proof}

\section{Simulation Based on Monte Carlo Approximation}
\label{sec:simulation}
The (maybe the only) downside of the universal portfolio algorithms is their computational complexity. 
It is not hard to see that the \emph{exact} computation of Cover's universal portfolio requires, on the $T$-th day of investment over $m$ stocks, $O(T^m)$ time complexity, and the computation quickly become infeasible for a long investment period; see \citep{Cover--Ordentlich1996} for a detailed argument. 
An efficient implementation of universal portfolios is a decades-old open problem and still remains as an active area of research~\citep{Luo--Wei--Zheng2018,VanErven--VanDerHoeven--Kotlowski--Koolen2020}.
Hence, in this paper, we consider a Monte Carlo simulation of the universal portfolio algorithms based on the cumulative wealth expression of a probability induced portfolio~\eqref{eq:wealth_achieved_by_prob_induced_portfolio}. While it is a very crude approximation for large $m$, $S$, or $\mathrm{Ndim}(\Gc)$, this at least provides a way to demonstrate the performance of the ideas.

First, note that from \eqref{eq:wealth_achieved_by_prob_induced_portfolio}, the cumulative wealth achieved by Cover's universal portfolio $\phi(q_{\Lap})$ can be written as
\[
S_n(\phi(q_{\Lap}), \xb^n)
=\sum_{y^n\in [m]^n} q_{\Lap}(y^n)\xb(y^n)
=\int_{\Bc} S_n(\thb, \xb^n) \mu(\thb) \diff\thb,
\]
since the Laplace probability assignment $q_{\Lap}(y^n)=\int_{\Bc} \mu(\thb)p_{\thb}(y^n)\diff\thb$ is a mixture with respect to a uniform density $\mu(\thb)$ over the simplex $\Bc$.
Hence, if we draw $N$ CRPs $\thb_1,\ldots,\thb_N$ from $\mu$ and \emph{buy-and-hold} uniformly over the CRPs, we will attain approximately similar wealth and the approximation will get better as $N$ becomes larger.
Note, however, that this naive approximation requires $N=\Omega(\frac{1}{\eps^m})$ to achieve an approximation error $\eps$ and thus may not be feasible when the number of stocks $m$ is large.

A similar crude approximation can be performed for the universal portfolio $\phi(q_{\Lap;S})$ with discrete side information $w^n$, since 
\begin{align*}
S_n(\phi(q_{\Lap;S}),\xb^n;w^n)
&= \sum_{y^n\in[m]^n} q_{\Lap;S}(y^n\|w^n) \xb(y^n)
= \prod_{s=1}^S S_{|\xb^n(s;w^n)|}(\phi(q_{\Lap}), \xb^n(s;w^n)),
\end{align*}
where $\xb^n(s;w^n)=(\xb_i\suchthat w_i=s,i\in[n])$,
since $q_{\Lap;S}(y^n\|w^n) = \prod_{s=1}^S q_{\Lap}(y^n(s;w^n))$.
That is, we can crudely approximate the performance of $\phi(q_{\Lap;S})$ by simply drawing many state-wise CRPs $\thb_{1:S}$ according to $\mu(\thb_{1:S})\defeq \mu(\thb_1)\cdots\mu(\thb_S)$ and running the buy-and-hold strategy.

We can now consider an approximation of the proposed strategy $\phi(q_{\Gc}^*)$.
By the epoch-wise construction of $q_{\Gc}^*$ as explicitly shown in \eqref{eq:assigned_probability}, the cumulative wealth can be also factorized as
\begin{align*}
S_n(\phi(q_{\Gc}^*),\xb^n;z^n)
&= \prod_{j=1}^{J} \sum_{y_{2^{j-1}+1}^{2^j} \in[m]^{2^{j-1}}} q_{\Gc;z^{2^{j-1}}}(y_{2^{j-1}+1}^{2^j} \| z_{2^{j-1}+1}^{2^j})
\xb_{2^{j-1}+1}^{2^j}(y_{2^{j-1}+1}^{2^j})\\
&= \prod_{j=1}^J S_{2^{j-1}}(\phi(q_{\Gc;z^{2^{j-1}}}), \xb_{2^{j-1}+1}^{2^j}; z_{2^{j-1}+1}^{2^j}).
\end{align*}
where we assume $n=2^J$ for simplicity.
Here, for each $j\in[J]$, if $\{\gt_1,\ldots,\gt_{\ell_j}\}$ is a minimal empirical covering of $\Gc$ with respect to $z^{2^{j-1}}$, we can write
\begin{align*}
S_{2^{j-1}}(q_{\Gc;z^{2^{j-1}}}, \xb_{2^{j-1}+1}^{2^j}; z_{2^{j-1}+1}^{2^j})
&= \frac{1}{\ell_j} 
\sum_{k=1}^{\ell_j}
S_{2^{j-1}}(\phi(q_{\Lap;S}), \xb_{2^{j-1}+1}^{2^j}; \gt_k(z_{2^{j-1}+1}^{2^j})).
\end{align*}
For each state function $\gt_k$, the summand is the cumulative wealth of the UP with the side information $\gt_k(z_{2^{j-1}+1}^{2^j})$ and thus can be approximated by the same argument from the previous paragraph. 

This leads to the following Monte Carlo simulation of the proposed algorithm. Let $N$ be the number of Monte Carlo samples used in the approximation.
\vspace{.5em}

\noindent\fbox{\begin{minipage}{\textwidth}
For each epoch $j=1,2,\ldots$:
\begin{enumerate}
\item Find an empirical covering $\{\gt_1,\ldots,\gt_{\ell_j}\} \subseteq \Gc$ with respect to $z^{2^{j-1}}$.
\item For each $k\in [\ell_j]$, draw $N$ state-wise CRPs $(\thb_{1:S,i}^{(k)})_{i=1}^N$ from $\mu(\thb_{1:S})$ at random.
\item During the $j$-th investment epoch, \ie $t\in(2^{j-1}, 2^j]$, run the buy-and-hold strategy uniformly over all sampled CRPs $(\thb_{1:S,i}^{(k)})_{i=1}^N$ for each $k\in [\ell_j]$.
\item At the end of the epoch, sell all stocks.
\end{enumerate}
\end{minipage}}\vspace{.5em}

We stress that this only simulates the cumulative wealth of the universal portfolio algorithm by directly estimating the cumulative wealth expression, rather than approximating actions of the algorithm for each round.
Note that a more sophisticated Monte Carlo Markov Chain based approximation for Cover's universal portfolio was proposed and analyzed by \citet{Kalai--Vempala2002}.
It is left as a future direction to extend their method for our algorithm with continuous side information.

In the following, we study a simple example for concreteness, which admits an easy construction of minimal empirical coverings. 
Note that, for a richer class of state functions, finding a minimal empirical covering may be another computational bottleneck.
\begin{example}
As a simple case of the canonical side information considered in Example~\ref{ex:canonical}, we choose the price relative of the stock 1 on the previous day as the continuous side information, \ie $z_t=\xv_{t-1,1}$, and a class of 1D threshold functions $\Gc=\{x\mapsto g_a(x)=1\{x\ge a\}\suchthat a>0\}$ of $\mathrm{Ndim}(\Gc)=1$. 
Note that we consider a binary state space ($S=2$). 
In this case, it is easy to show that $\{g_{x_{0,1}},\ldots, g_{x_{t-1,1}}\}$ is a minimal empirical covering given $z^t=(x_{i,1})_{i=0}^{t-1}$.

In general, we can consider $z_t=\xv_{t-1}$ with a class of product of 1D threshold functions $\Gc=\{x\mapsto g_{\mathbf{a}}(\xb)=(1\{x_1\ge a_1\},\ldots,1\{x_m\ge a_m\})\suchthat \mathbf{a}=(a_1,\ldots,a_m) \in\Real_{++}^m\}$ of $\mathrm{Ndim}(\Gc) \le m \log m$~\citep[Lemma 29.6]{Shalev-Shwartz--Ben-David2014} and $S=2^m$.
Given $z^t=\xv^{t-1}$, $\{g_{\xv_0},\ldots,g_{\xv_{t-1}}\}$ is a minimal empirical covering.
\end{example}

We briefly demonstrate how the proposed portfolio performs on two real stocks.
We collected the 6-year period from Jan-01-2012 to Dec-31-2017 (total 1508 trading days) of two stocks Ford (F) and Macy's (M).
Over the period, Ford went up by a factor of 1.11, while Macy's went down by a factor of 0.77.
The best CRP in hindsight, which turns out to be the buy-and-hold of Ford, achieves a growth factor of 1.11. The uniform CRP achieves a growth factor of 0.99.
While the universal portfolio without side information achieves a growth factor of only, the proposed algorithm with the yesterday's prices and the class of thresholding functions achieves a growth factor of 1.15.

We note that there can exist more sophisticated, carefully chosen side information and state-function classes that may exhibit better performance in practice than the simple example above. 
We leave the problem of constructing good continuous side information and extensive experiments as future work.

\bibliographystyle{plainnat}
\bibliography{ref.bib}

\end{document}